\newtheorem{theorem}{Theorem}[section]
\newtheorem{prop}{Proposition}
\numberwithin{subcase}{case}
\newtheorem{remark}[theorem]{Remark}
\newcommand{\keywords}[1]{%
  \vspace{0.5cm}
  \noindent \textbf{Keywords:} #1
}
\begin{document}
\algnewcommand{\algorithmicgoto}{\textbf{go to}}%
\algnewcommand{\Goto}[1]{\algorithmicgoto~\ref{#1}}%
\title{\Large\bf \vspace{-8ex}
Spline tie-decay temporal networks}
\author{Chanon Thongprayoon$^1$, Naoki Masuda$^{1,2,3,*}$}
\date{\vspace{-5ex}}
\maketitle
\vspace{-14pt}
\begin{center}
\small
$^1$Department of Mathematics, State University of New York at Buffalo, Buffalo, NY, USA.\\
$^2$Institute for Artificial Intelligence and Data Science, State University of New York at Buffalo, Buffalo, NY, USA\\
$^3$Center for Computational Social Science, Kobe University, Kobe, Japan\\
*Corresponding author: naokimas@buffalo.edu
\end{center}

\begin{abstract}
Increasing amounts of data are available on temporal, or time-varying, networks. There have been various representations of temporal network data each of which has different advantages for downstream tasks such as mathematical analysis, visualizations, agent-based and other dynamical simulations on the temporal network, and discovery of useful structure. The tie-decay network is a representation of temporal networks whose advantages include the capability of generating continuous-time networks from discrete time-stamped contact event data with mathematical tractability and a low computational cost. However, the current framework of tie-decay networks is limited in terms of how each discrete contact event can affect the time-dependent tie strength (which we call the kernel). Here we extend the tie-decay network model in terms of the kernel. Specifically, we use a cubic spline function for modeling short-term behavior of the kernel and an exponential decay function for long-term behavior, and graft them together. This spline version of tie-decay network enables delayed and $C^1$-continuous interaction rates between two nodes while it only marginally increases the computational and memory burden relative to the conventional tie-decay network. We show mathematical properties of the spline tie-decay network and numerically showcase it with three tasks: network embedding, a deterministic opinion dynamics model, and a stochastic epidemic spreading model.
\end{abstract}

\keywords{Temporal networks, network embedding, opinion dynamics, SIR model}

\section{Introduction\label{sec:introduction}}

Various complex systems are modeled by networks consisting of nodes and edges~\cite{barabasi2016network, newman2018networks}. In fact, many networks evolve over time much like the case of dynamical processes occurring on them. Such networks are known as temporal networks~\cite{bansal2010thedynamic, holme2012temporal, holme2015modern, masuda2020guide}. Examples of temporal networks include contact networks among human or animal individuals whose dynamics of connectivity are in most cases driven by mobility of individuals, and air transportation networks in which time-stamped flights connect pairs of airports (i.e., nodes) in a time-dependent manner as the flight schedule changes over weeks and months. Incorporating temporality into network analysis has proved to be successful in performing downstream tasks. For instance, time-dependent communities in temporal networks can be more informative than communities in static networks, with the former providing features such as birth, death, merger, and split of communities~\cite{palla2007quantifying, fortunato2010community}. Other examples of the utility of temporal network analysis are link prediction~\cite{lu2011link, tabourier2016predicting} (but see~\cite{he2024sequential}) and network control~\cite{li2017fundamental}.

There are various representations of temporal networks, some of which are lossy and others are not \cite{holme2015modern, masuda2020guide, gauvin2022randomized}. In the simplest case of the so-called event-based representation of temporal network, we ignore the duration of each time-stamped event and view a temporal network as a set of triplets. A triplet encodes a time-stamped edge, or contact event by specifying the two nodes involved in the event and the time of the event. Empirical data are given in this format in a majority of cases. Another popular representation of temporal networks is the snapshot representation, with which one considers a temporal network as a sequence of static networks switching from one to another at discrete points of time. Between two consecutive switches, the network stays static. Mathematically, a sequence of static networks can be represented by a sequence of adjacency, Laplacian, or other matrices, which often allows us to exploit matrix algebra techniques to theoretically analyze temporal networks. The snapshot representation is essentially a discrete-time representation in the sense that the network is piecewise constant over time. The tie-decay network model enables one to construct a genuinely continuous-time time-dependent adjacency matrix given a sequence of time-stamped events as input.
Tie-decay networks assume that the edge weight instantaneously increases by $1$ upon an event arrival and that the edge weight exponentially decays over time in the absence of events~\cite{ahmad2021tie}.
This assumption is reminiscent of the Hawkes process,
which is a self-exciting point process model~\cite{vere1970stochastic, hawkes1971point, hawkes1971spectra, masuda2013self-exciting, laub2015hawkes}, with the time-dependent edge weight corresponding to the event rate of the non-homogeneous Poisson process generating event sequences.
The tie-decay networks have been used for the PageRank centrality~\cite{porter2020nonlinearity+, ahmad2021tie}, opinion dynamics modeling~\cite{sugishita2021opinion}, diffusion dynamics~\cite{zuo2021models}, epidemic dynamics~\cite{chen2022epidemic}, and network embedding~\cite{thongprayoon2023embedding, thongprayoon2023online}.

There are limitations of the current tie-decay network model. First, the tie-decay network model gives rise to discontinuity of edge weights upon event arrivals. In contrast, some temporal network analyses assume the continuity of edge weights in time such as adaptive network dynamics~\cite{sayama2013modeling, berner2023adaptive}. Second, there are situations in which the effect of discrete events does not instantly come in action such as epidemiological dynamics~\cite{ma2004global, kumar2020deterministic, liu2020covid}, social dynamics~\cite{olfatisaber2004consensus, atay2013consensus}, and neuronal dynamics~\cite{rall1967distinguishing, destexhe1994synthesis, van1994inhibition}.

To alleviate these limitations, we extend the tie-decay network model by allowing the edge weight (i.e., tie strength) to increase with a delay in response to a contact event on the edge and to avoid instantaneous jumps in the edge weight. Specifically, we continuously and gradually increase the edge weight, which is assumed to exponentially decay at large delay. To realize this property, we graft a cubic spline polynomial and an exponential tail to represent the effect of a single contact event on the edge weight over time. This combination allows us to maintain a concise and memory- and time-efficient updating rule for the edge weight owing to the exponential tail part, while resolving the two aforementioned problems thanks to the spline part. We also provide some mathematical properties of the proposed tie-decay network model and numerically demonstrate the method with network embedding, an opinion spreading dynamics model, and an epidemic dynamics model.

\section{Tie-decay network models}

In this section, we introduce the tie-decay network and propose its extension using cubic spline polynomials.

\subsection{Exponential tie-decay network}\label{sec:tie-decay}

We first explain the original tie-decay network model, in which the edge weight exponentially decays over time in the absence of external input~\cite{ahmad2021tie}. Consider an edge $(i, j)$ between the $i$th and $j$th nodes in the given network with $N$ nodes.
We assume that input data are a sequence of time-stamped contact events on any edge. We represent such a sequence on edge $(i, j)$ by $\{ \tilde{t}_1, \tilde{t}_2, \ldots \}$, where $0\le \tilde{t}_1 \le \tilde{t}_2 \le \cdots$, and $\tilde{t}_{\ell}$ is the time of the $\ell$th event on edge $(i, j)$; we do not indicate the edge index in $\tilde{t}_{\ell}$ for simplifying the notation.
We denote the $N\times N$ time-dependent weighted adjacency matrix of the tie-decay network by $B(t)=[b_{ij}(t)]$, where $b_{ij}(t)$ represents the weight of the edge $(i,j)$ at time $t$ ($\ge 0$). We set
\begin{equation}
b_{ij}(t) = \sum_{\ell; \tilde{t}_\ell \le t} e^{-\alpha(t-\tilde{t}_\ell)}H(t-\tilde{t}_\ell),\label{eq:tie_decay_edge_update}
\end{equation}
where $H$ denotes the Heaviside step function~\cite{ahmad2021tie}. Equation~\eqref{eq:tie_decay_edge_update} indicates that $b_{ij}(t)$ increases by $1$ when an event occurs on $(i,j)$ and decays exponentially in time at rate $\alpha$ $(>0)$. We refer to this original tie-decay network model as the exponential tie-decay network. We show in Fig.~\ref{fig:spline_decay}(a) an example time course of $b_{ij}(t)$.

The exponential decay in Eq.~\eqref{eq:tie_decay_edge_update} results in a convenient updating rule for matrix $B(t)$ upon arrivals of time-stamped events at arbitrary edges and times. To elaborate, we consider a sequence of times $0\le t_1 < t_2 < \cdots < t_n$, where $t_\ell$ is the $\ell$th time at which least one event occurs between any pair of nodes in the network, and $n$ is the number of unique times at which any event occurs in the given data. Then, we obtain
\begin{equation}
B(t) = \sum_{\ell; t_\ell\le t} e^{-\alpha(t-t_\ell)}A_\ell,
\label{eq:tie_decay_matrix}
\end{equation}
where $A_\ell$ is the adjacency matrix of the input network at time $t_\ell$. In other words, the $(i, j)$ entry of $A_{\ell}$ is equal to the number of events occurring on $(i, j)$ at time $t_{\ell}$. Equation~\eqref{eq:tie_decay_matrix} yields
\begin{equation}
B(t_{\ell+1}) = e^{-\alpha(t_{\ell+1}-t_\ell)}B(t_\ell) + A_{\ell+1},
\label{eq:tie_decay_matrix_update}
\end{equation}
for each $\ell\in\{1,\ldots,n-1\}$. Moreover, for any $t\in[t_1,\infty)$, there is a unique $\ell\in\{1,\ldots,n-1\}$ such that $t_\ell \le t < t_{\ell+1}$ with the interpretation of $t_{n+1} = \infty$. Therefore,
\begin{equation}
B(t) = e^{-\alpha(t-t_\ell)}B(t_\ell),
\label{eq:tie_decay_matrix_arbitrary_update}
\end{equation}
for all $t\in[t_{\ell}, t_{\ell+1})$. By combining Eqs.~\eqref{eq:tie_decay_matrix_update} and \eqref{eq:tie_decay_matrix_arbitrary_update}, one can conveniently compute $B(t)$ at any time $t$.

\subsection{Spline tie-decay network and its mathematical properties}\label{sec:intro_spline}

As we discussed in the introduction section, there are two problems in the exponential tie-decay network. First, the edge weight $b_{ij}(t)$ discontinuously changes in time upon an event. Second, related to the first problem, the exponential tie-decay network assumes that the response of the edge weight to a single event is fast. We say so because the exponential function, $e^{-\alpha t}$ ($t\ge 0$), which represents the effect of a single event on $b_{ij}(t)$, is peaked at $t=0$. Although one can implement a delayed response by making $\alpha$ $(>0)$ small, the exponential function is still peaked at $t=0$. In contrast, to model realistic situations, we may want to assume that the effect of a single event on the edge takes some time to build up, reflecting response times needed for human individuals or the time needed for chemical or electrical signals to diffuse, for example.

A straightforward solution to these two problems is to introduce a general kernel function, $\phi(t)$ $(t\ge 0)$, which represents the contribution of the event occurring at time $0$ to the edge weight at time $t$. The exponential tie-decay network corresponds to $\phi(t) = e^{-\alpha t}$. We can resolve the two problems by imposing $\phi(0) = 0$ and using an arbitrary continuous unimodal function $\phi(t)$, for example. However, the use of a general function $\phi(t)$ disables a simple and efficient updating rule for $B(t)$ available in the case of the exponential tie-decay network. With a general $\phi(t)$, one needs to keep track of all the past event times, not just $B(t_{\ell})$ and $A_{\ell+1}$, to calculate $B(t_{\ell+1})$. The same trade-off between the analytical tractability with the use of the exponential kernel function and enhanced realism with the use of a more general $\phi(t)$ also exists for the Hawkes process~\cite{masuda2013self-exciting, masuda2020guide, kobayashi2016tideh, murayama2021modeling, nurek2023predicting}.

Therefore, as a compromise, we propose to use the cubic spline interpolation to implement $\phi(t)$ at small $t$, respecting $\phi(0) = 0$ and a delayed response nature of $\phi(t)$, and an exponential kernel function to implement $\phi(t)$ at large $t$. We stitch these two functions together by imposing that $\phi(t)$ is $C^1$-continuous in $t$. We refer to this variant of tie-decay network the spline tie-decay network.

We assume that an event occurs on edge $(i,j)$ at time $t=0$ and define $\phi(t)$ by
\begin{equation}
\phi(t) = \begin{cases}
f(t) & \text{if } t\in[0,h], \\
ke^{-\alpha(t-h)} & \text{if } t\in (h,\infty).
\end{cases}
\label{eq:spline_decay_one_kernel}
\end{equation}
In Eq.~\eqref{eq:spline_decay_one_kernel}, $f(t)$ denotes a cubic spline, and $h, k >0$ are parameters.
For continuity, we impose $f(0) = 0$.

There are variants of cubic spline depending on the choice of the boundary conditions and the number of panels. We use the clamped boundary spline interpolation with one panel. In other words, we use a single cubic polynomial to cover $t\in [0,h]$ and impose that $\phi(t)$ is $C^1$-continuous at $t=0$ and $t=h$, i.e., $f(0)=0$, which we already imposed, $\left. \frac{\text{d}f(t)}{\text{d}t} \right|_{t=0}=0$,
$f(h)= k$, and $\left. \frac{\text{d}f(t)}{\text{d}t} \right|_{t=h}= -k\alpha$.

Because $f(0)=0$ and $\left. \frac{\text{d}f(t)}{\text{d}t} \right|_{t=0}=0$, we can write
\begin{equation}
f(t) = rt^3+st^2,
\end{equation}
where $r,s\in\mathbb{R}$. Conditions $f(h)=k$ and $\left. \frac{\text{d}f(t)}{dt} \right|_{t=h}= -k\alpha$ yield
\begin{equation}
rh^3 + sh^2 = k
\label{eq:first_poly_coeff}
\end{equation}
and
\begin{equation}
3rh^2+2sh = -\alpha k,
\label{eq:second_poly_coeff}
\end{equation}
respectively. By solving Eqs.~\eqref{eq:first_poly_coeff} and \eqref{eq:second_poly_coeff}, we obtain
\begin{equation}
r = -\frac{\alpha k}{h^2} -\frac{2k}{h^3}
\label{eq:first_poly_coeff_solv}
\end{equation}
and
\begin{equation}
s = \frac{3k}{h^2} + \frac{\alpha k}{h},
\label{eq:second_poly_coeff_solv}
\end{equation}
specifying the spline kernel function. We compare in Fig.~\ref{fig:spline_decay} the time course of the edge weight between an exponential (shown in Fig.~\ref{fig:spline_decay}(a)) and spline (shown in Fig.~\ref{fig:spline_decay}(b)) tie-decay network for the same sequence of time-stamped events. The spline tie-decay network realizes a smooth time course of the edge weight, $b_{ij}(t)$, and peaks of the edge weight occurs with some delay after each event. 

We recall that $h$ and $k$ are parameters controlling the shape of $\phi(t)$. The limit $h\to 0$ corresponds to an exponential kernel as follows:
\begin{prop}
$\phi(t) = ke^{-\alpha t}$ almost everywhere as $h\to 0$.
\end{prop}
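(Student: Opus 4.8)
The plan is to reduce the statement to a pointwise limit: I will show that $\phi(t)\to ke^{-\alpha t}$ for every fixed $t>0$ and then observe that the only point where this can fail, namely $t=0$, is Lebesgue-null, which is exactly what ``almost everywhere'' asks for. No nontrivial estimate on the spline panel is needed, because for a fixed positive $t$ the shrinking panel $[0,h]$ eventually no longer contains $t$.

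The key step is the pointwise computation for $t>0$. Fix such a $t$. Since $h\to 0$, we have $h<t$ for all sufficiently small $h$, and for those $h$ the definition~\eqref{eq:spline_decay_one_kernel} places us in the exponential regime, so $\phi(t)=ke^{-\alpha(t-h)}=ke^{-\alpha t}e^{\alpha h}$. Letting $h\to 0^+$ and using continuity of the exponential ($e^{\alpha h}\to 1$) gives $\phi(t)\to ke^{-\alpha t}$. Thus the convergence holds at every $t>0$, i.e.\ everywhere except possibly at $t=0$.

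It remains to handle $t=0$ and conclude. At $t=0$ the clamped boundary condition forces $\phi(0)=f(0)=0$ for every $h>0$, whereas $ke^{-\alpha\cdot 0}=k>0$, so the convergence genuinely fails there; but $\{0\}$ has Lebesgue measure zero, so $\phi(t)\to ke^{-\alpha t}$ for almost every $t\ge 0$, proving the proposition. In fact there is no real obstacle here — the whole content is the observation in the previous paragraph — and the only point worth flagging is precisely why the qualifier ``almost everywhere'' is necessary rather than removable: the limit must differ from $k e^{-\alpha t}$ at $t=0$. One may optionally remark that, although the spline coefficients $r$ and $s$ in~\eqref{eq:first_poly_coeff_solv}--\eqref{eq:second_poly_coeff_solv} diverge as $h\to 0$, the spline values stay uniformly bounded on the panel since $|f(t)|\le |r|h^3+|s|h^2=O(k)$ there; this is not needed for the stated a.e.\ convergence but would be the relevant point if one wished to upgrade the conclusion to convergence in $L^1_{\mathrm{loc}}$.
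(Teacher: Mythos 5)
Your proof is correct and follows essentially the same route as the paper: the paper's ``direct calculation'' is exactly your pointwise observation that for fixed $t>0$ one eventually has $h<t$, so $\phi(t)=ke^{-\alpha(t-h)}\to ke^{-\alpha t}$, while the limit is $0$ only at the single point $t=0$, a Lebesgue-null set. Your extra remarks (why ``almost everywhere'' cannot be dropped, and the boundedness of the spline on the panel) are correct but not needed for the stated result.
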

\begin{proof}
By direct calculation, we obtain
\begin{equation}
\lim_{h\to 0}\phi(t) = 
\begin{cases}
0 & \text{if } t = 0, \\
ke^{-\alpha t} & \text{if } t\in (0,\infty).
\end{cases}
\end{equation}
Therefore, $\displaystyle\lim_{h\to 0}\phi(t) = ke^{-\alpha t}$ almost everywhere on $[0,\infty)$.
\end{proof}

\begin{prop}\label{prop:spline-is-smaller}
$\phi(t) \le ke^{-\alpha (t-h)}$ for any $t\ge 0$.
\end{prop}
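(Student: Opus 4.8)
The plan is to treat the two pieces of the piecewise definition \eqref{eq:spline_decay_one_kernel} separately. For $t > h$ the asserted inequality is in fact an equality, $\phi(t) = ke^{-\alpha(t-h)}$, so there is nothing to do there, and all the content lies in the case $t\in[0,h]$, where we must show $f(t)\le ke^{-\alpha(t-h)} = ke^{\alpha(h-t)}$. It is worth noting up front that the cheap bound $f(t)\le k\le ke^{\alpha(h-t)}$ does \emph{not} work: since $f(h)=k$ with $f'(h)=-\alpha k<0$, the cubic $f$ overshoots the value $k$ at its unique interior critical point, so we genuinely need the exponential envelope, not just the constant $k$.

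The key device I would use is to rescale by $e^{\alpha t}$ and study $\psi(t):=f(t)\,e^{\alpha t}$ on $[0,h]$; the target inequality is equivalent to $\psi(t)\le ke^{\alpha h}=\psi(h)$. I would establish this by showing $\psi$ is non-decreasing on $[0,h]$. Differentiating and using $f(t)=rt^3+st^2$, one gets $\psi'(t)=e^{\alpha t}\bigl(f'(t)+\alpha f(t)\bigr)=e^{\alpha t}\,t\,q(t)$, where $q(t)=\alpha r t^2+(3r+\alpha s)t+2s$ is a quadratic with coefficients read off from \eqref{eq:first_poly_coeff_solv}--\eqref{eq:second_poly_coeff_solv}. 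Since $e^{\alpha t}>0$ and $t\ge 0$ on $[0,h]$, it suffices to prove $q(t)\ge 0$ on $[0,h]$.

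For the sign of $q$ I would use three facts. First, its leading coefficient $\alpha r$ is negative, because $\alpha>0$ and $r=-k(\alpha h+2)/h^3<0$, so $q$ is a concave parabola. Second, $q(0)=2s=2k(3+\alpha h)/h^2>0$. Third, $q(h)=0$: rewriting the defining clamped conditions \eqref{eq:first_poly_coeff}--\eqref{eq:second_poly_coeff} as $rh^2+sh=k/h$ and $3rh+2s=-\alpha k/h$ and substituting gives $q(h)=\alpha(rh^2+sh)+(3rh+2s)=\alpha k/h-\alpha k/h=0$. A concave parabola that is positive at one endpoint of an interval and vanishes at the other is non-negative on the whole interval (its two real roots lie at $t=h$ and at some $t<0$, with $q>0$ strictly in between). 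Hence $q\ge 0$ on $[0,h]$, so $\psi'\ge 0$ there, so $\psi(t)\le\psi(h)=ke^{\alpha h}$; dividing by $e^{\alpha t}$ yields $f(t)\le ke^{-\alpha(t-h)}$, which together with the equality on $(h,\infty)$ proves the claim.

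The main obstacle here is conceptual rather than computational: it is the realization that $f$ overshoots $k$, which kills the one-line argument and forces the $e^{\alpha t}$-rescaling; once that substitution is made, the factorization $\psi'(t)=e^{\alpha t}\,t\,q(t)$ together with the endpoint values $q(0)>0$ and $q(h)=0$ makes the result drop out of concavity. The only point needing a little care is to derive $q(h)=0$ from the original conditions \eqref{eq:first_poly_coeff}--\eqref{eq:second_poly_coeff} rather than from the solved forms, which keeps that computation short and transparent.
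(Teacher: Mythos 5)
Your proof is correct, and it takes a genuinely different route from the paper's. The paper works directly with the difference $g_1(t) = ke^{-\alpha(t-h)} - f(t)$, notes $g_1(h)=0$, and shows $g_1' < 0$ on $[0,h)$; to control $g_1'$ it rescales to $v=t/h$ and splits the derivative into two nonpositive pieces, one of which requires a further nested monotonicity argument for $g_2(v) = -e^{\alpha h(1-v)}+v^2$ (increasing with $g_2(1)=0$). Your device of multiplying by the integrating factor $e^{\alpha t}$ and proving that $\psi(t)=f(t)e^{\alpha t}$ is non-decreasing up to $\psi(h)=ke^{\alpha h}$ removes the exponential from the problem entirely: the factorization $\psi'(t)=e^{\alpha t}\,t\,q(t)$ reduces everything to the sign of a concave quadratic with $q(0)>0$ and $q(h)=0$, which is purely algebraic and avoids the second layer of calculus the paper needs. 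Both arguments share the same anchor (monotonicity on $[0,h]$ pinned at the endpoint $t=h$), so conceptually they are cousins, but yours buys a shorter, polynomial-only verification, while the paper's is more direct in that it estimates the claimed difference itself. Two small remarks: your observation that $f$ overshoots $k$ (so the constant envelope fails) is accurate and worth keeping, since it explains why the factor $e^{\alpha(h-t)}\ge 1$ is genuinely needed; and you can get $q(h)=0$ even more cheaply than via Eqs.~\eqref{eq:first_poly_coeff}--\eqref{eq:second_poly_coeff}, because $h\,q(h) = f'(h)+\alpha f(h) = -\alpha k + \alpha k = 0$ follows immediately from the clamped boundary conditions.
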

\begin{proof}
The statement trivially holds true with equality when $t \ge h$.

To analyze the case of $t < h$, let us consider
\begin{equation}
g_1(t) \equiv ke^{-\alpha (t-h)} - r t^3 - s t^2.
\end{equation}
We want to show $g_1(t) \ge 0$, $t \in [0, h]$.
We obtain
\begin{equation}
g'_1(t) = - \alpha k e^{-\alpha (t-h)} - 3rt^2 - 2st
= - \alpha k e^{-\alpha (t-h)} + 3\left(\frac{\alpha k}{h^2} + \frac{2k}{h^3}\right)t^2 - 2\left( \frac{3k}{h^2} + \frac{\alpha k}{h} \right)t.
\label{eq:g'_1-original}
\end{equation}
Because $g_1(h) = 0$, it suffices to show that $g'_1(t) < 0$ in $t\in [0, h)$.

By changing the variable to $v \equiv t/h$, we obtain
\begin{align}
(\text{RHS of Eq.~}\eqref{eq:g'_1-original}) &= -\alpha ke^{\alpha h(1-v)} + 3\alpha kv^2 + \frac{6k}{h} v^2 - \frac{6k}{h} v - 2\alpha kv\notag\\
&=  \alpha k\left[-e^{\alpha h(1-v)} + v^2\right] + \left(2\alpha k + \frac{6k}{h}\right) v (v-1).
\end{align}
Because $\alpha k, 2\alpha k + \frac{6k}{h} > 0$, $v \in [0, 1]$, and $g'_1(h) = 0$, it suffices to show that
\begin{equation}
g_2(v) \equiv - e^{\alpha h(1-v)} + v^2
\end{equation}
is negative in $v \in [0, 1)$. In fact, $g_2(v) < 0$ for $v \in [0, 1)$ because $g_2(1) = 0$ and
\begin{equation}
g'_2(v) = \alpha h e^{\alpha h (1-v)} + 2v > 0.
\end{equation}
Thus, we have shown that $g_1(t) \ge 0$, $t\in [0, h]$.
\end{proof}

\begin{remark}
Using Proposition~\ref{prop:spline-is-smaller}, we obtain
\begin{equation}
\left\| \phi(t) - ke^{-\alpha(t-h)}\right\|_1 = \int_{0}^h \left[ ke^{-\alpha(t-h)} - \phi(t) \right] dt 
= \frac{k(e^{\alpha h}-1)}{\alpha} - \frac{\alpha h^2 k}{12} - \frac{hk}{2} = O(h)
\end{equation}
as $h\to 0$.
\end{remark}

Now we derive a convenient updating equation for the edge weight of the spline tie-decay network.
We obtain
\begin{equation}
b_{ij}(t) = \sum_{\ell;\tilde{t}_\ell\le t}\phi(t-\tilde{t}_\ell).
\label{eq:spline_decay_weight}
\end{equation}
It should be noted that Eq.~\eqref{eq:spline_decay_weight} is reduced to Eq.~\eqref{eq:tie_decay_edge_update} when $\phi(t) = e^{-\alpha t}$. 

We show in Algorithm~\ref{alg:phylo_class} the computation of $b_{ij}(t)$ as $t$ advances from $\tilde{t}_{n}$ to $\tilde{t}_{n+1}$.
The algorithm to update $B(t)$ as $t$ advances from $t_n$ to $t_{n+1}$ is similar.
Algorithm~\ref{alg:phylo_class} works by expressing $b_{ij}(\tilde{t}_n)$ as a sum of two types of quantities. The first quantity, denoted by $c_{ij}(t)$, is the aggregate contribution of events whose effects are exponentially decaying. These events occurred before $\tilde{t}_n - h$. 
We initialize $c_{ij}(t)$ by setting $c_{ij}(0) = 0$.
The second type of quantity is the contribution of the other events, i.e., those that occurred between $\tilde{t}_n - h$ and $\tilde{t}_n$. The effect of these events is determined by the cubic polynomial part of the kernel function. For each of these events, we need to maintain $\tilde{t}_{\ell}$, which we realize by variables $m$ and $Q$ in Algorithm~\ref{alg:phylo_class}.
The algorithm updates these two types of quantity as the time advances from $\tilde{t}_n$ to $\tilde{t}_{n+1}$ as follows. First, it moves the $m$th events satisfying $\tilde{t}_n - h \le \tilde{t}_{m} \le \tilde{t}_n$ and $\tilde{t}_{m} < \tilde{t}_{n+1} - h$ from the second type to the first type. Algorithm~\ref{alg:phylo_class} does this by removing such $\tilde{t}_{m}$ from $Q$, increase $m$ by $1$, and adding the contribution of the event at $\tilde{t}_m$ to $c_{ij}(t)$. Second, it registers the ($n+1$)th event to the second type by adding $\tilde{t}_{n+1}$ to $Q$.
Algorithm~\ref{alg:phylo_class} removes the need of maintaining all the event times $\tilde{t}_1, \ldots, \tilde{t}_n$; one only needs to keep track of the event times after $t-h$, i.e., $\tilde{t}_m, \ldots, \tilde{t}_n$. The number of such recent events would not indefinitely increase because older events move from the second to the first type as time goes by.

\begin{algorithm}
\caption{Updating the edge weight of the spline tie-decay network}
\label{alg:phylo_class}
	\hspace*{\algorithmicindent} \textbf{Input} $\displaystyle c_{ij}(\tilde{t}_n)$, $m$, $Q = \{ \tilde{t}_m, \tilde{t}_{m+1}, \ldots, \tilde{t}_n\}$, where $m = \displaystyle \mathop{\mathrm{argmin}}\limits_{\ell=1,\ldots,n} \{\tilde{t}_\ell \mid \tilde{t}_n - \tilde{t}_\ell \le h\}$, and $\tilde{t}_{n+1}$
	\begin{algorithmic}[1]
	\State If $b_{ij}(t)$, $t \in (\tilde{t}_n, \tilde{t}_{n+1}]$ is requested, output $b_{ij}(t) = e^{-\alpha (t - \tilde{t}_n)} c_{ij}(\tilde{t}_n) + \displaystyle\sum_{\tilde{t}_\ell\in Q} f(t-\tilde{t}_\ell)$ 
	\State $c_{ij}(\tilde{t}_{n+1}) = e^{-\alpha(\tilde{t}_{n+1} - \tilde{t}_n)} c_{ij}(\tilde{t}_n)$
	\While{$\tilde{t}_{n+1} - \tilde{t}_m > h$}
		\State {$c_{ij}(\tilde{t}_{n+1}) \gets c_{ij}(\tilde{t}_{n+1}) + k e^{-\alpha(\tilde{t}_{n+1} - \tilde{t}_m)}$}
		\State $Q \gets Q \setminus \{\tilde{t}_m\}$ 
		\State $m \gets m+1$ 
	\EndWhile
	\State $Q \gets Q \cup \{ \tilde{t}_{n+1} \}$
	\State\Return $c_{ij}(\tilde{t}_{n+1})$, $m$, $Q$
	\end{algorithmic} 
\end{algorithm}

Algorithm~\ref{alg:phylo_class} is justified because
\begin{align}
b_{ij}(\tilde{t}_{n+1}) =& \sum_{\ell=1}^{n+1} \phi(\tilde{t}_{n+1} - \tilde{t}_{\ell}) \notag\\
=& \sum_{\ell; \tilde{t}_{n+1} - \tilde{t}_{\ell} \ge h} k e^{-\alpha(\tilde{t}_{n+1} - \tilde{t}_{\ell})} 
+ \sum_{\ell; \tilde{t}_{n+1} - \tilde{t}_{\ell} < h} f(\tilde{t}_{n+1} - \tilde{t}_{\ell}) \notag\\
=& \sum_{\ell; \tilde{t}_{n} - \tilde{t}_{\ell} \ge h} k e^{-\alpha(\tilde{t}_{n+1} - \tilde{t}_{\ell})} 
+ \sum_{\ell; \tilde{t}_{n+1} - \tilde{t}_{\ell} \ge h \text{ and } \tilde{t}_{n} - \tilde{t}_{\ell} < h} k e^{-\alpha(\tilde{t}_{n+1} - \tilde{t}_{\ell})} 
+ \sum_{\ell; \tilde{t}_{n+1} - \tilde{t}_{\ell} < h} f(\tilde{t}_{n+1} - \tilde{t}_{\ell}) \notag\\
=& e^{-\alpha(\tilde{t}_{n+1} - \tilde{t}_n)} \sum_{\ell; \tilde{t}_{n} - \tilde{t}_{\ell} \ge h} k e^{-\alpha(\tilde{t}_{n} - \tilde{t}_{\ell})} 
+ \sum_{\ell; \tilde{t}_{n+1} - \tilde{t}_{\ell} \ge h \text{ and } \tilde{t}_{n} - \tilde{t}_{\ell} < h} k e^{-\alpha(\tilde{t}_{n+1} - \tilde{t}_{\ell})} 
+ \sum_{\ell; \tilde{t}_{n+1} - \tilde{t}_{\ell} < h} f(\tilde{t}_{n+1} - \tilde{t}_{\ell}) \notag\\
=& e^{-\alpha(\tilde{t}_{n+1} - \tilde{t}_n)} c_{ij}(\tilde{t}_n)
+ \sum_{\ell; \tilde{t}_{n+1} - \tilde{t}_{\ell} \ge h \text{ and } \tilde{t}_{n} - \tilde{t}_{\ell} < h} k e^{-\alpha(\tilde{t}_{n+1} - \tilde{t}_{\ell})} 
+ \sum_{\ell; \tilde{t}_{n+1} - \tilde{t}_{\ell} < h} f(\tilde{t}_{n+1} - \tilde{t}_{\ell}).
\label{eq:update}
\end{align}
The sum of the first and second terms on the right-hand side of Eq.~\eqref{eq:update} is equal to $c_{ij}(\tilde{t}_{n+1})$.

\begin{figure}[t]
    \centering
     \includegraphics[width=1\textwidth]{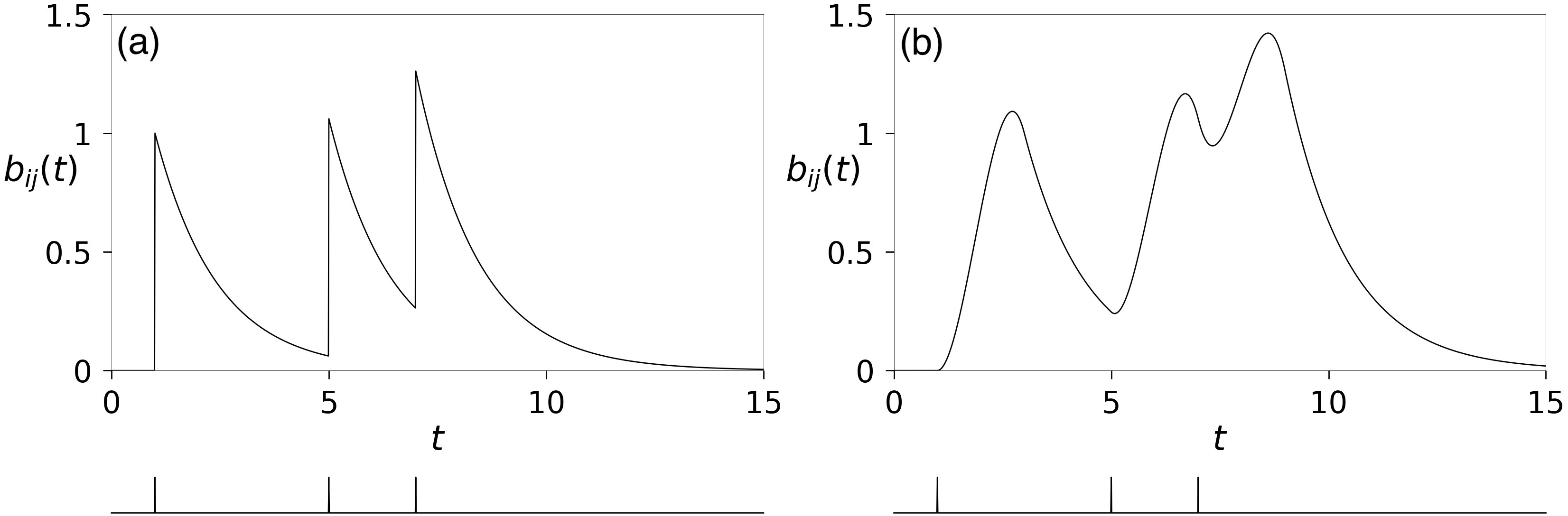}
     \caption{Time courses of the edge weight in tie-decay networks. (a) Exponential tie-decay. (b) Spline tie-decay. The vertical ticks under each plotting box represent the times of input events on the edge (i.e., $\tilde{t}_1=1, \tilde{t}_2=5$, and $\tilde{t}_3=7$). We set $\alpha=0.7$ in both (a) and (b). In (b), we set $h=2$ and $k=1$. 
     }
\label{fig:spline_decay}
\end{figure}

\subsection{A different parameterization of the spline tie-decay network}

The spline tie-decay network has three parameters, $\alpha$, $h$, and $k$.
In this section, we relate them to three parameters that intuitively characterize $\phi(t)$ in a different manner, i.e.,
the area under the curve, $\mathcal{A}$, the mean response delay time, $\mu$, and the variance of the response delay time, $\sigma^2$. Note that $\mathcal{A}$ represents the total impact of an event on the edge weight over time.

We obtain
\begin{equation}
\mathcal{A} = \int_0^{\infty} \phi(t) \text{d}t = \int_{0}^{h} (rt^3 + st^2) \text{d}t + \int_{h}^{\infty} ke^{-\alpha(t-h)} \text{d}t = \frac{rh^4}{4} + \frac{sh^3}{3}+ \frac{k}{\alpha}.
\label{eq:calc_auc}
\end{equation}
Substitution of Eqs.~\eqref{eq:first_poly_coeff_solv} and \eqref{eq:second_poly_coeff_solv} into Eq.~\eqref{eq:calc_auc} gives
\begin{equation}
\mathcal{A} = \frac{\alpha h^2k}{12} + \frac{hk}{2} + \frac{k}{\alpha}.
\label{eq:auc}
\end{equation}

Now, we impose $\mathcal{A} = 1$ as normalization, which allows us to regard $\phi(t)$ as a probability density function encoding the likelihood of the response delay time since the input event. By exploiting this interpretation, we calculate $k$, $\mu \equiv \langle \tau \rangle$, and $\langle\tau^2\rangle$ in terms of $\alpha$ and $h$,
where $\tau$ denotes the response delay time and $\langle\cdot\rangle$ represents the expectation with respect to $\phi$. First, 
Eq.~\eqref{eq:auc} combined with $\mathcal{A} = 1$ leads to
\begin{equation}
k = \frac{12\alpha}{\alpha^2 h^2 + 6\alpha h + 12}.
\label{eq:parameter_k}
\end{equation}
Second, we obtain
\begin{align}
\mu = \int_{0}^{\infty} \tau\phi(\tau) \text{d}\tau &= \int_{0}^{h} \tau (r\tau^3 + s\tau^2) \text{d}\tau + \int_{h}^{\infty} \tau ke^{-\alpha(\tau-h)} \text{d}\tau \notag\\
%
%
%
&= \frac{3\left(\alpha^3 h^3 + 7\alpha^2 h^2 + 20\alpha h + 20\right)}{5\alpha\left(\alpha^2 h^2 + 6\alpha h + 12\right)}
\label{eq:mean}
\end{align}
and
\begin{align}
\langle\tau^2\rangle = \int_{0}^{\infty} \tau^2\phi(\tau) \text{d}\tau &= \int_{0}^{h} \tau^2(r\tau^3 + s\tau^2) \text{d}\tau + \int_{h}^{\infty} \tau^2ke^{-\alpha(\tau-h)} \text{d}\tau \notag\\
%
%
%
&= \frac{2(\alpha^4 h^4 + 8\alpha^3 h^3 + 30\alpha^2 h^2 + 60\alpha h + 60)}{5\alpha^2\left(\alpha^2 h^2 + 6\alpha h + 12\right)},
\label{eq:second_order_mean}
\end{align}
where we used Eqs~\eqref{eq:first_poly_coeff_solv}, \eqref{eq:second_poly_coeff_solv}, and \eqref{eq:parameter_k}. Using Eqs.~\eqref{eq:mean} and \eqref{eq:second_order_mean}, we obtain
\begin{align}
\sigma^2 &= \langle\tau^2\rangle - \mu^2\notag\\
&= \frac{\alpha^6 h^6 + 14\alpha^5 h^5 + 99\alpha^4 h^4 + 480\alpha^3 h^3 + 1680\alpha^2 h^2 + 3600\alpha h + 3600}{25\alpha^2\left(\alpha^4 h^4 + 12\alpha^3 h^3 + 60\alpha^2 h^2 + 144\alpha h + 144\right)}.
\label{eq:variance}
\end{align}
Because we have imposed $\mathcal{A} = 1$, by setting the values of $\mu$ and $\sigma^2$, one obtains the values of $\alpha, h$, and $k$.

Next, we derive some mathematical properties of $\mu$ and $\sigma$.
\begin{theorem}
For a fixed value $\mu$, it holds true that $\sigma$ monotonically decreases as a function of $x \equiv \alpha h \in (0,\infty)$. In particular, we have $\displaystyle\lim_{x\to 0}\sigma = \mu$ and $\displaystyle\lim_{x\to\infty}\sigma=\frac{\mu}{3}$.
\label{thm:std_polynomial}
\end{theorem}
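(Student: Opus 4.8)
The plan is to reparameterize everything in terms of $x \equiv \alpha h$ and reduce the claim to the monotonicity of a single rational function. Write $N(x) \equiv x^3 + 7x^2 + 20x + 20$ and $P(x) \equiv x^6 + 14x^5 + 99x^4 + 480x^3 + 1680x^2 + 3600x + 3600$, the numerators appearing in Eqs.~\eqref{eq:mean} and \eqref{eq:variance} after the substitution. The first thing I would record is the factorization $\alpha^4 h^4 + 12\alpha^3 h^3 + 60\alpha^2 h^2 + 144\alpha h + 144 = (\alpha^2 h^2 + 6\alpha h + 12)^2$, which rewrites Eq.~\eqref{eq:variance} as $\sigma^2 = P(x)/[25\alpha^2(x^2+6x+12)^2]$, while squaring Eq.~\eqref{eq:mean} gives $\mu^2 = 9N(x)^2/[25\alpha^2(x^2+6x+12)^2]$. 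Dividing, the common factor $25\alpha^2(x^2+6x+12)^2$ cancels and
\[
\frac{\sigma^2}{\mu^2} = \frac{P(x)}{9\,N(x)^2}.
\]
Fixing $\mu$ therefore makes $\sigma = \mu\sqrt{P(x)}\,/\,(3N(x))$ a function of $x$ alone (the parameter $\alpha$ is then pinned down by Eq.~\eqref{eq:mean} and stays positive), so it suffices to prove that $x \mapsto P(x)/N(x)^2$ is strictly decreasing on $(0,\infty)$.

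For this I would differentiate. Since $N(x) > 0$ for all $x \ge 0$ (all of its coefficients are positive),
\[
\frac{\mathrm d}{\mathrm d x}\,\frac{P(x)}{N(x)^2} = \frac{P'(x)N(x) - 2P(x)N'(x)}{N(x)^3},
\]
so the sign is governed by the polynomial $W(x) \equiv P'(x)N(x) - 2P(x)N'(x)$. A direct expansion shows that the degree-$8$ and degree-$7$ contributions of $P'N$ and of $2PN'$ agree and cancel, leaving
\[
W(x) = -20x^6 - 480x^5 - 4720x^4 - 24000x^3 - 68400x^2 - 105600x - 72000,
\]
all of whose coefficients are negative, hence $W(x) < 0$ for every $x \ge 0$. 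It follows that $P(x)/N(x)^2$, and therefore $\sigma$, is strictly decreasing in $x$ on $(0,\infty)$.

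The two limits then come from evaluating $R(x) \equiv P(x)/(9N(x)^2)$ at the endpoints: $R(0) = 3600/(9\cdot 20^2) = 1$, so $\sigma \to \mu$ as $x \to 0$, and $R(x) \to 1/9$ as $x \to \infty$ because $P$ and $N^2$ are both monic of degree $6$, so $\sigma \to \mu/3$. The one genuinely tedious step is the polynomial bookkeeping behind the coefficients of $W(x)$; the rest is immediate once the cancellation of the two top-degree terms and the uniform negativity of the remaining ones are established, and since the whole monotonicity statement hinges on that sign pattern it is the place to check the arithmetic most carefully.
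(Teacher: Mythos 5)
Your proposal is correct and takes essentially the same route as the paper: both eliminate $\alpha$ using Eq.~\eqref{eq:mean} to obtain $\sigma^2 = \mu^2 P(x)/\bigl(9N(x)^2\bigr)$, which is exactly the paper's Eq.~\eqref{eq:var_polynomial} since $N(x)^2$ expands to the denominator polynomial there, then establish monotonicity by showing the derivative is negative — your $W(x)$ is precisely $-20$ times the numerator polynomial in the paper's Eq.~\eqref{eq:derivative_var} — and evaluate the limits at $x\to 0$ and $x\to\infty$ in the same way. The only cosmetic difference is your use of the factorization $x^4+12x^3+60x^2+144x+144=(x^2+6x+12)^2$, which keeps the algebra in factored form but does not alter the argument.
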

\begin{proof}
Substitution of $x=\alpha h$ in Eq.~\eqref{eq:mean} yields
\begin{equation}
\mu = \frac{3\left(x^3 + 7x^2 + 20x + 20\right)}{5\alpha\left(x^2 + 6x + 12\right)}.
\end{equation}
Therefore,
\begin{equation}
\alpha^2 = \frac{9\left(x^3 + 7x^2 + 20x + 20\right)^2}{25\mu^2\left(x^2 + 6x + 12\right)^2}.
\label{eq:alpha_sq_mean}
\end{equation}
Similarly, Eq.~\eqref{eq:variance} implies
\begin{equation}
\sigma^2 = \frac{x^6 + 14x^5 + 99x^4 + 480x^3 + 1680x^2 + 3600x + 3600}{25\alpha^2\left(x^4 + 12x^3 + 60x^2 + 144x + 144\right)},
\label{eq:var_polynomial_alpha}
\end{equation}
Substitution of Eq.~\eqref{eq:alpha_sq_mean} in Eq.~\eqref{eq:var_polynomial_alpha} yields
\begin{equation}
\sigma^2 = \frac{\mu^2(x^6 + 14x^5 + 99x^4 + 480x^3 + 1680x^2 + 3600x + 3600)}{9(x^6 + 14x^5 + 89x^4 + 320x^3 + 680x^2 + 800x + 400)}.
\label{eq:var_polynomial}
\end{equation}
Therefore, we obtain $\displaystyle\lim_{x\to 0}\sigma =  \mu$ and $\displaystyle\lim_{x\to\infty}\sigma = \frac{ \mu}{3}$.

Furthermore, Eq.~\eqref{eq:var_polynomial} yields
\begin{equation}
\frac{\text{d}\sigma^2}{\text{d}x} = \frac{-20\mu^2(x^6 + 24x^5 + 236x^4 + 1200x^3 + 3420x^2 + 5280x + 3600)}{9(x^9 + 21x^8 + 207x^7 + 1243x^6 + 4980x^5 + 13740x^4 + 26000x^3 + 32400x^2 + 24000x + 8000)}.
\label{eq:derivative_var}
\end{equation}
Because $\frac{\text{d}\sigma^2}{\text{d}x} < 0$ for all $x>0$, we conclude that $\sigma^2$ monotonically decreases as $x$ increases, and so does $\sigma$.
\end{proof}
\begin{remark}
Given a fixed value of $\sigma$, similar to Theorem~\ref{thm:std_polynomial}, one uses Eqs.~\eqref{eq:mean} and \eqref{eq:variance} to obtain
\begin{equation}
\mu = 3\sigma\sqrt{\frac{x^6 + 14x^5 + 89x^4 + 320x^3 + 680x^2 + 800x + 400}{x^6 + 14x^5 + 99x^4 + 480x^3 + 1680x^2 + 3600x + 3600}}.
\label{eq:mean_polynomial_thm}
\end{equation}
Because we have shown that the right-hand side of Eq.~\eqref{eq:var_polynomial} monotonically decreases in $x$, we find that $\mu$ monotonically increases in $x$ when $\sigma$ is fixed. Moreover, we have $\displaystyle\lim_{x\to 0}\mu = \sigma$ and $\displaystyle\lim_{x\to\infty}\mu = 3\sigma$.
\end{remark}

\section{Numerical demonstrations}

In this section, we demonstrate the spline tie-decay network in three applications: network embedding, opinion dynamics, and susceptible-infectious-recovered (SIR) epidemic dynamics.

\subsection{Data}

We use two empirical data sets of time-stamped event sequences between pairs of human individuals. A node of the network represents an individual. Both data sets are provided by SocioPatterns~\cite{stehle2011high, vanhems2013estimating, gemmetto2014mitigation}.

The first data set, referred to as Primary School, is a temporal contact network recorded from students and teachers in a primary school in Lyon, France over two days (i.e., 10/1/2009 and 10/2/2009)~\cite{stehle2011high, gemmetto2014mitigation}. The data covers the time-stamped contact events from 8:45 AM to 5:20 PM on the first day and from 8:30 AM to 5:05 PM on the second day. On each day, there were three breaks permitting students to interact with peers in other classes, which were the morning break for 20--25 minutes beginning at 10:30 AM, the lunch break between 12 PM and 2 PM, and the afternoon break for 20--25 minutes beginning at 3:30 PM.

For the embedding task, we consider all the contact events between pairs of 242 individuals; there are 8,317 edges and 125,773 time-stamped contact events on these edges. For the opinion and epidemic dynamics simulations, we only use the contact events on the first day. This is because the long intermittent period between the two days does not change the state of any node in the case of the opinion dynamics and would let all the infectious nodes recover before the second day starts in the case of the SIR dynamics. On the first day, 236 individuals appear and generate 5,901 edges and 60,623 time-stamped contact events.

The second data set, referred to as Hospital, is a temporal contact network obtained from a hospital ward in Lyon, France~\cite{vanhems2013estimating}. The network has $75$ individuals, which consist of $11$ medical doctors, $35$ nurses, and $29$ patients. The network contains $1,139$ edges and $32,424$ time-stamped contact events, which span the weekdays of a week, from 1 PM on Monday, 12/6/2010, to 2 PM on Friday, 12/10/2010. We use the network from 6 AM to 8 PM on Tuesday. The Tuesday network contains $49$ nodes with $454$ edges and $8,790$ time-stamped contact events. Note that $95.6\%$ of the contacts happen between 6 AM and 8 PM on weekdays.

\subsection{Embedding}

In this section, we investigate temporal network embedding, with which we map a tie-decay network into a trajectory in a latent space. 
The goal of embedding with the use of spline tie-decay networks is to reduce sudden changes in the embedding trajectory in $\mathbb{R}^2$ when a new event arrives. Such sudden changes occur with exponential tie-decay networks because they are discontinuous in time~\cite{thongprayoon2023embedding, thongprayoon2023online}.

\subsubsection{Landmark multidimensional scaling}

To embed the tie-decay network, we use the landmark multidimensional scaling (LMDS) \cite{Desilva2003Nips, de2004sparse, Platt2005Aistats}, which is an out-of-sample extension of the classical multidimensional scaling (MDS). The MDS and LMDS seek to find low-dimensional representations of arbitrary objects in a high-dimensional space such that the pairwise dissimilarity is preserved as much as possible. In mathematical terms, the LMDS is a map $X\mapsto\mathbb{R}^\omega$, where the dimension of $X$ is much larger than that of the embedding space, $\omega$.

We first provide a brief review on the classical MDS. Consider a set of $n$ objects $S=\{x_1,\ldots, x_n\}$ in a metric space $(X,d)$, where $d$ is a distance measure. To embed $S$ into $\mathbb{R}^\omega$ using the classical MDS, we compute the squared distance matrix $\Delta = [\Delta_{\ell\ell'}]$, where $\Delta_{\ell\ell'} = d^2(x_\ell,x_{\ell'})$. Next, we calculate the double mean-centered dot product matrix $D = -\frac{1}{2}H_n\Delta H_n$, where $H_n = I_n - \frac{1}{n}J_n$, $I_n$ is the $n\times n$ identity matrix, and $J_n$ is $n\times n$ matrix whose all entries are $1$. The embedding coordinate of $x_\ell$ is given by the $\ell$th column of the following $\omega\times n$ matrix:
\begin{equation}
L_\omega = \begin{bmatrix}
\sqrt{\lambda_1}v^{\top}_1\\
\vdots\\
\sqrt{\lambda_\omega}v^{\top}_\omega
\end{bmatrix},
\label{eq:classical_MDS}
\end{equation}
where $\lambda_\ell$ is the $\ell$th largest (positive) eigenvalue of $D$, vector $v_\ell$ is the $\ell^2$-normalized right eigenvector associated with $\lambda_\ell$, and $^\top$ represents the transposition.

The LMDS works by applying the MDS to a subset of the entire data set called landmarks. With the assistance of the landmarks, the LMDS efficiently determines the embedding coordinates of the other arbitrary points. We use $S$ as the set of landmarks and run the MDS on $S$ using Eq.~\eqref{eq:classical_MDS}. For an arbitrary point $x\in X\setminus S$, its embedding coordinate, $\psi(x)\in\mathbb{R}^\omega$, is given by
\begin{equation}
\psi(x) = -\frac{1}{2}L'_\omega\left(\delta_x - \delta_\mu\right),
\label{eq:LMDS_map}
\end{equation}
where
\begin{equation}
L'_\omega =  \begin{bmatrix}
\frac{1}{\sqrt{\lambda_1}}v^{\top}_1\\
\vdots\\
\frac{1}{\sqrt{\lambda_\omega}}v^{\top}_\omega
\end{bmatrix}\in\mathbb{R}^{\omega\times n},
\label{eq:projection_mat_LMDS}
\end{equation}
\begin{equation}
\delta_x = \begin{bmatrix}
d^2(x,x_1)\\
\vdots\\
d^2(x,x_n)
\end{bmatrix},
\label{eq:vec_sq_dist_landmark}
\end{equation}
and
\begin{equation}
\delta_\mu = \frac{1}{n}\sum_{\ell=1}^{n}\begin{bmatrix}
d^2(x_1,x_\ell)\\
\vdots\\
d^2(x_n,x_\ell)
\end{bmatrix}.
\label{eq:vec_avg_col}
\end{equation}

\subsubsection{Network embedding by landmark multidimensional scaling}\label{sec:netw_embedding}

In this section, we explain a method to embed tie-decay networks into $\mathbb{R}^\omega$ using LMDS, which we previously proposed~\cite{thongprayoon2023embedding}. The main idea is to use the tie-decay network at each time $t_\ell$ as a landmark. Recall that $t_\ell$ is the $\ell$th time at which an event occurs in the entire network. The method proceeds as follows. First, we use $\{B(t_1),\ldots, B(t_n)\}$ as the set of landmarks for the LMDS to embed $B(t)$ for $t\not\in\{t_1,\ldots, t_n\}$. Using Eq.~\eqref{eq:LMDS_map}, we obtain
\begin{equation}
\psi(B(t)) = -\frac{1}{2}L'_\omega\left(\delta_{B}(t) - \delta_\mu\right),
\label{eq:LMDS_map_network}
\end{equation}
where
\begin{equation}
\delta_{B} = \begin{bmatrix}
d^2(B(t),B(t_1))\\
\vdots\\
d^2(B(t),B(t_n))
\end{bmatrix}
\label{eq:vec_sq_dist_landmark_network}
\end{equation}
and
\begin{equation}
\delta_\mu = \frac{1}{n}\sum_{\ell=1}^{n}\begin{bmatrix}
d^2(B(t_1),B(t_\ell))\\
\vdots\\
d^2(B(t_n),B(t_\ell))
\end{bmatrix}.
\label{eq:vec_avg_col_network}
\end{equation}

We use the unnormalized Laplacian network distance, denoted by $d_{L}$, as $d$~\cite{thongprayoon2023embedding, thongprayoon2023online}. The Laplacian matrix $L(t)$ of the tie-decay network at time $t$ is defined as $L(t)\equiv \tilde{D}(t)-B(t)$, where $\tilde{D}(t)$ is the diagonal matrix whose $i$th diagonal entry is $\displaystyle\sum_{j=1}^{N}b_{ij}(t)$. We obtain
\begin{equation}
d_{L}\left(B(t_\ell), B(t_{\ell'})\right) = \sqrt{\sum_{k'=1}^{N}\left[\rho_{k'}(L(t_\ell))-\rho_{k'}(L(t_{\ell'}))\right]^2},
\label{eq:laplac_dist}
\end{equation}
where $\rho_{k'}$ is the $k'$th smallest eigenvalue of the matrix in the argument~\cite{wilson2008study, masuda2019detecting, donnat2018tracking}. One can replace $\sum_{k'=1}^{N}$ by $\sum_{k'=2}^N$ in Eq.~\eqref{eq:laplac_dist} because the smallest eigenvalue of a Laplacian matrix is always $0$.

\subsubsection{Goodness of embedding}

We evaluate the goodness of network embedding using two criteria. The first criterion stands on the fact that the MDS preserves the pairwise distances between the points in $S$ in the Euclidean space if and only if matrix $D = -\frac{1}{2}H_n\Delta H_n$ is positive semi-definite \cite{graepel1999classification, pekalska2001generalized}. Therefore, as goodness of fit, we use the ratio of the $\omega$ largest positive eigenvalues over the sum of the absolute values of all eigenvalues \cite{duin2005dissimilarity}, i.e.,
\begin{equation}
g = \frac{\displaystyle\sum_{\ell=1}^{\omega} \lambda_\ell}{\displaystyle\sum_{\ell'=1}^{n}\left|\lambda_{\ell'}\right|},
\label{eq:eigen_ratio}
\end{equation}
where $\lambda_1 >\cdots > \lambda_n$ are the eigenvalues of $D$, and we have assumed that $D$ possesses at least $\omega$ positive eigenvalues. Index $g$ ranges from $0$ to $1$. A larger $g$ value suggests better embedding.

The second criterion is the normalized stress function~\cite{borg2005modern, duin2005dissimilarity}
%
%
defined by
\begin{equation}
\theta = \sqrt{\frac{\displaystyle\sum_{\ell=1}^{n}\sum_{\ell'=1}^{\ell-1} \left[d(x_{\ell'},x_{\ell}) - \left|\psi(x_{\ell'})-\psi(x_\ell)\right|_2\right]^2}{\displaystyle\sum_{\ell=1}^{n}\sum_{\ell'=1}^{\ell-1}d^2(x_{\ell'},x_{\ell})}},
\label{eq:stress_function}
\end{equation}
where $\left|\psi(x_{\ell'})-\psi(x_\ell)\right|_2$ is the Euclidean distance in $\mathbb{R}^{\omega}$. One obtains a smaller $\theta$ if the embedding better preserves the distance between pairs of data points in the original space. A guideline of acceptable $\theta$ values is $\theta < 0.15$ \cite{borg2005modern}.

\subsubsection{Results}

We show in Figs.~\ref{fig:embedding}(a) and~\ref{fig:embedding}(b) the trajectories of Primary School network for each of the two days in the two-dimensional embedding space with the use of the spline tie-decay network. We set $\alpha = 10^{-2}$, $h=60$, and $k=1$ in this figure and Fig.~\ref{fig:hosp_embedding}. As expected, the embedded trajectories do not abruptly change as new contact events arrive. This is not the case for the trajectories in the case of the exponential tie-decay network, which we show in Figs.~\ref{fig:embedding}(c) and \ref{fig:embedding}(d) for each of the two days. We remark that Figs.~\ref{fig:embedding}(c) and \ref{fig:embedding}(d) replicate Figs.~3(a) and 3(b), respectively, in our previous article~\cite{thongprayoon2023embedding}.

In Figs.~\ref{fig:embedding}(a) and~\ref{fig:embedding}(b), the segments of the trajectory during the morning break (around the points labeled 10:30 AM) and the lunch break (around the points labeled 12 PM to 2 PM) inhabit in the second quadrant and the fourth quadrant, respectively. The segments of the trajectories during these breaks are relatively distinguishable from those for the rest of the day. This feature is shared by the trajectories with the exponential tie-decay network (see Figs.~\ref{fig:embedding}(c) and~\ref{fig:embedding}(d)).

For the present embedding with the spline tie-decay network, we obtain $g=95.23\%$ and $\theta=0.05$ with the embedding dimension $\omega=2$. These values indicate that $\omega=2$ is enough, which is the same conclusion as that for the exponential tie-decay network applied to the same data set~\cite{thongprayoon2023embedding}. As a comparison, the trajectories for the two days
with the exponential tie-decay network and $\omega = 2$ shown in Figs.~\ref{fig:embedding}(c) and~\ref{fig:embedding}(d) yield $g=95.16\%$ and $\theta=0.06$.

The trajectories of the spline tie-decay network are approximately confined in $(x,y)\in(-70,50)\times (-20,25)$. This spans a wider area than that for the exponential tie-decay network shown in Figs.~\ref{fig:embedding}(c) and~\ref{fig:embedding}(d), in which the trajectories are denser and reside approximately in $(x,y)\in (-50,40)\times (-15,20)$. Although the difference in the quality of the embedding, measured by $g$ and $\theta$, between the two cases is not substantial, we propose that trajectories that are not too much condensed in the embedding space is desirable for visualization purposes. Figure~\ref{fig:embedding} suggests that the different densities of the trajectories are not due to the difference in the scale of the embedding space, which is automatically set by the MDS.

We show in Fig.~\ref{fig:hosp_embedding}(a) the two-dimensional trajectory of the Hospital data from 6 AM to 8 PM on Tuesday using the spline tie-decay network. We show in Fig.~\ref{fig:hosp_embedding}(b) the corresponding trajectory using the exponential tie-decay network. Similar to the case of the Primary School data, the trajectory with the spline tie-decay network is continuous, smooth, and not too condensed, which contrasts with the case of the exponential tie-decay network. We obtain $g=88.41\%$ and $\theta=0.11$ for Fig.~\ref{fig:hosp_embedding}(a), and $g=87.90\%$ and $\theta=0.12$ for Fig.~\ref{fig:hosp_embedding}(b). These values assure that the embedding dimension of $\omega=2$ is adequate and that the results for the exponential kernel are consistent with our previous results~\cite{thongprayoon2023embedding}.

\begin{figure}[t]
    \centering
     \includegraphics[width=1\textwidth]{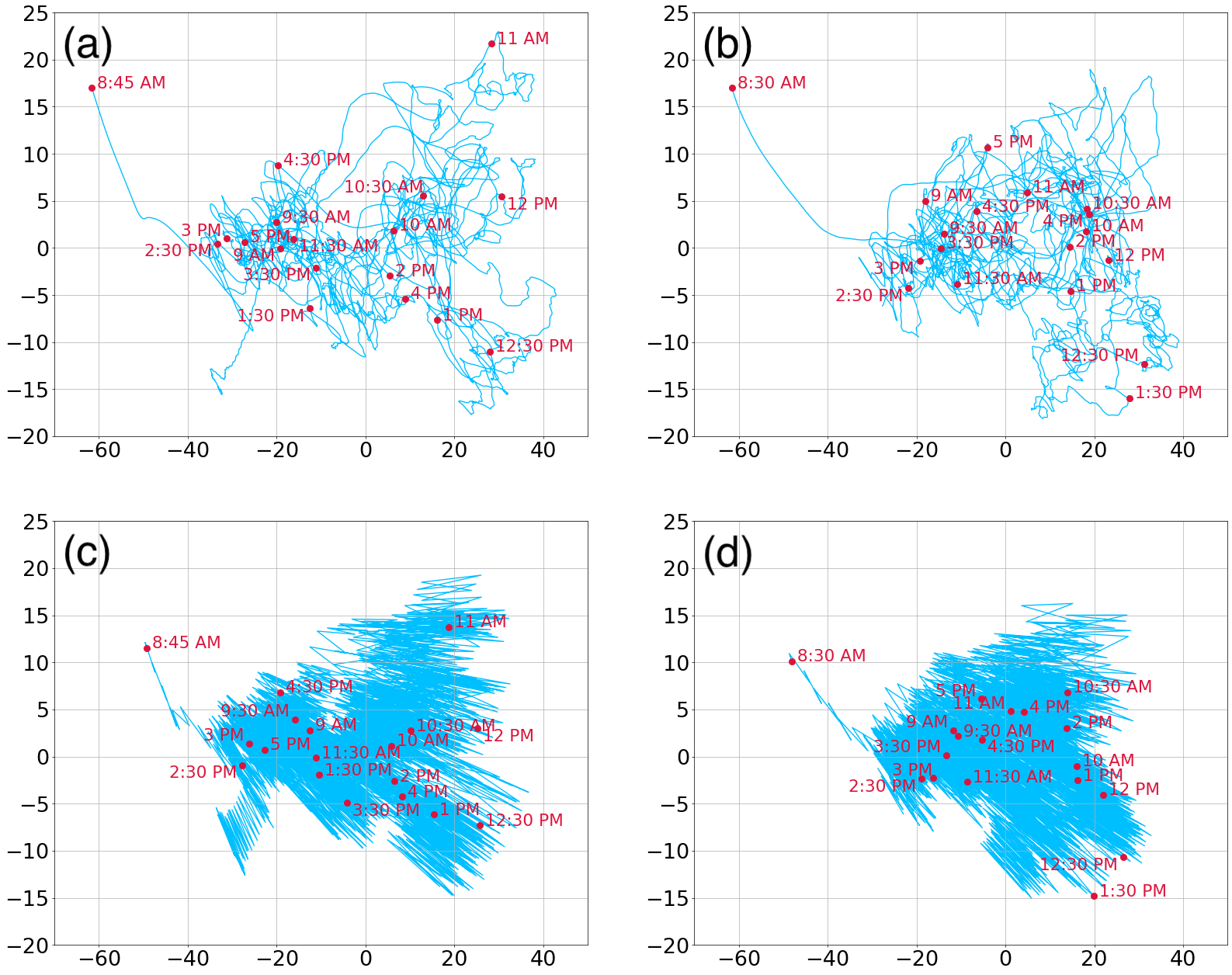}
     \caption{Trajectories of the Primary School tie-decay networks in the two-dimensional space. (a) Day 1, spline. (b) Day 2, spline. (c) Day 1, exponential. (d) Day 2, exponential. In (a) and (b), we set $\alpha = 10^{-2}$, $h=60$, and $k=1$ for the spline kernel function. In (c) and (d), we set $\alpha=10^{-2}$ for the exponential kernel function.
The initial network on each day is the null network, whose coordinate is $(x,y)\approx (-61.59, 17.00)$ in (a) and (b), and $(x,y)\approx (-51.42, 14.93)$ in (c) and (d).}
\label{fig:embedding}
\end{figure}

\begin{figure}[h]
    \centering
     \includegraphics[width=1\textwidth]{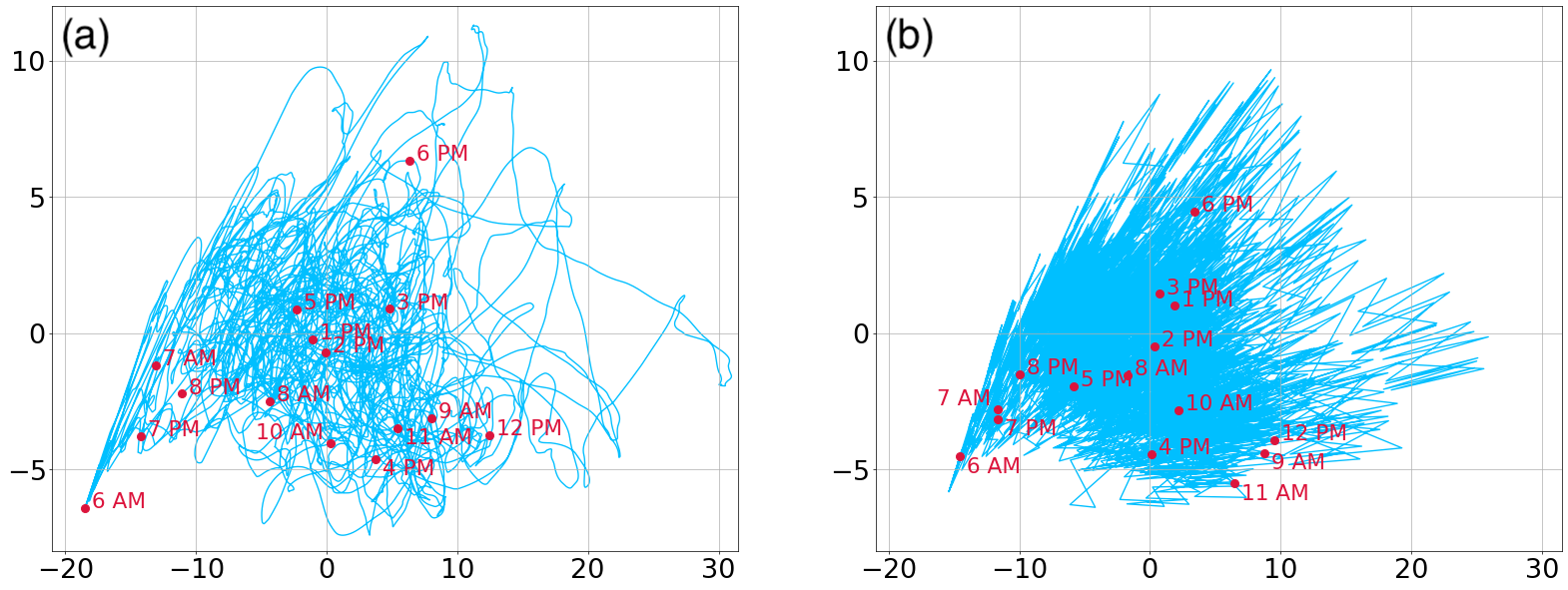}
     \caption{Trajectories of the Hospital tie-decay networks from 6 AM to 8 PM on Tuesday. (a) Spline kernel function with $\alpha = 10^{-2}$, $h=60$, and $k=1$.
      (b) Exponential kernel function with $\alpha=10^{-2}$.
The initial network is the null network, whose coordinate is $(x,y)\approx (-18.49, -6.41)$ in (a) and $(x,y)\approx (-15.39, -5.82)$ in (b).}
\label{fig:hosp_embedding}
\end{figure}

We assessed the robustness of the present embedding method by varying the values of three parameters. We show in Table~\ref{table:parameter_sensitivity} the values of $g$ and $\theta$ when one increases or decreases $\alpha$, $h$, and/or $k$ five-fold. We find that the quality of embedding in terms of $g$ and $\theta$ changes little in response to the changes in these parameter values for both Primary School and Hospital data.

\begin{table}[t]
\begin{center}
\caption{Goodness of embedding for the Primary School and Hospital tie-decay networks for various parameter values. Note that we use $(\alpha, h, k)=(10^{-2}, 60, 1)$ in the other analyses in this section. For the Primary School data, we used the events from both days for the embedding.}
\label{table:parameter_sensitivity}
\begin{tabular}{|>{\centering\arraybackslash}m{1.2cm}|>{\centering\arraybackslash}m{1.2cm}|>{\centering\arraybackslash}m{1.2cm}|>{\centering\arraybackslash}m{1.2cm}|>{\centering\arraybackslash}m{1.2cm}|>{\centering\arraybackslash}m{1.2cm}|>{\centering\arraybackslash}m{1.2cm}|}
        \hline
        \multicolumn{3}{|c|}{Parameter} & \multicolumn{2}{c|}{Primary School} & \multicolumn{2}{c|}{Hospital} \\ 
        \hline
        $\alpha$ & $h$ & $k$ & $g$ & $\theta$ & $g$ & $\theta$ \\ 
        \hline
        $10^{-2}$ & $12$ & $1$ & $95.19\%$ & $0.05$ & $88.29\%$ & $0.12$ \\ 
        \hline
        $10^{-2}$ & $12$ & $5$ & $95.19\%$ & $0.05$ & $88.29\%$ & $0.12$ \\ 
        \hline
        $10^{-2}$ & $12$ & $0.2$ & $95.19\%$ & $0.05$ & $88.29\%$ & $0.12$ \\ 
        \hline
        $10^{-2}$ & $60$ & $1$ & $95.23\%$ & $0.05$ & $88.41\%$ & $0.11$ \\
        \hline
        $10^{-2}$ & $60$ & $5$ & $95.23\%$ & $0.05$ & $88.41\%$ & $0.11$ \\
        \hline
        $10^{-2}$ & $60$ & $0.2$ & $95.23\%$ & $0.05$ & $88.41\%$ & $0.11$ \\
        \hline
        $10^{-2}$ & $300$ & $1$ & $95.41\%$ & $0.05$ & $89.20\%$ & $0.11$ \\
        \hline
        $10^{-2}$ & $300$ & $5$ & $95.41\%$ & $0.05$ & $89.20\%$ & $0.11$ \\
        \hline
        $10^{-2}$ & $300$ & $0.2$ & $95.41\%$ & $0.05$ & $89.20\%$ & $0.11$ \\
        \hline
        
        $5\times 10^{-2}$ & $12$ & $1$ & $93.81\%$ & $0.07$ & $86.75\%$ & $0.13$ \\ 
        \hline
        $5\times 10^{-2}$ & $12$ & $5$ & $93.81\%$ & $0.07$ & $86.75\%$ & $0.13$ \\ 
        \hline
        $5\times 10^{-2}$ & $12$ & $0.2$ & $93.81\%$ & $0.07$ & $86.75\%$ & $0.13$ \\ 
        \hline
        $5\times 10^{-2}$ & $60$ & $1$ & $94.76\%$ & $0.06$ & $87.77\%$ & $0.12$ \\
        \hline
        $5\times 10^{-2}$ & $60$ & $5$ & $94.76\%$ & $0.06$ & $87.77\%$ & $0.12$ \\
        \hline
        $5\times 10^{-2}$ & $60$ & $0.2$ & $94.76\%$ & $0.06$ & $87.77\%$ & $0.12$ \\
        \hline
        $5\times 10^{-2}$ & $300$ & $1$ & $95.27\%$ & $0.05$ & $89.15\%$ & $0.11$ \\
        \hline
        $5\times 10^{-2}$ & $300$ & $5$ & $95.27\%$ & $0.05$ & $89.15\%$ & $0.11$ \\
        \hline
        $5\times 10^{-2}$ & $300$ & $0.2$ & $95.27\%$ & $0.05$ & $89.15\%$ & $0.11$ \\
        \hline

	$2\times 10^{-3}$ & $12$ & $1$ & $95.76\%$ & $0.05$ & $88.44\%$ & $0.12$ \\ 
        \hline
        $2\times 10^{-3}$ & $12$ & $5$ & $95.76\%$ & $0.05$ & $88.44\%$ & $0.12$ \\ 
        \hline
        $2\times 10^{-3}$ & $12$ & $0.2$ & $95.76\%$ & $0.05$ & $88.44\%$ & $0.12$ \\ 
        \hline
        $2\times 10^{-3}$ & $60$ & $1$ & $95.79\%$ & $0.05$ & $88.50\%$ & $0.12$ \\
        \hline
        $2\times 10^{-3}$ & $60$ & $5$ & $95.79\%$ & $0.05$ & $88.50\%$ & $0.12$ \\
        \hline
        $2\times 10^{-3}$ & $60$ & $0.2$ & $95.79\%$ & $0.05$ & $88.50\%$ & $0.12$ \\
        \hline
        $2\times 10^{-3}$ & $300$ & $1$ & $95.94\%$ & $0.05$ & $88.83\%$ & $0.12$ \\
        \hline
        $2\times 10^{-3}$ & $300$ & $5$ & $95.94\%$ & $0.05$ & $88.83\%$ & $0.12$ \\
        \hline
        $2\times 10^{-3}$ & $300$ & $0.2$ & $95.94\%$ & $0.05$ & $88.83\%$ & $0.12$ \\
        \hline
    \end{tabular}
\end{center}
\end{table}

\subsection{Opinion dynamics\label{sub:opinion}}

As a second demonstration of the spline tie-decay network, we analyze a deterministic opinion dynamics model in continuous time.

\subsubsection{Model}

Let $x_i(t) \in \mathbb{R}$ be the opinion of the $i$th node at time $t$, where $i\in\{1,\ldots, N\}$.
We consider the Laplacian-driven deterministic opinion dynamics on temporal networks given by~\cite{mirzaev2013laplacian, sugishita2021opinion}
\begin{equation}
\frac{\text{d}\vec{x}}{\text{d}t} = -L(t)\vec{x}(t),
\label{eq:diff_eq_opinion}
\end{equation}
where $\vec{x}(t) = \left[x_1(t)\ \cdots\ x_N(t)\right]^{\top}$, and
we recall that $L(t)$ is the Laplacian matrix at time $t$ introduced in section~\ref{sec:netw_embedding}. We expect that the opinions of the different nodes converge to a common value. We ask the speed at which this occurs depending on the kernel of the spline tie-decay network.

\subsubsection{Simulation methods}

We integrate Eq.~\eqref{eq:diff_eq_opinion} using the Euler method with a step size of 10 seconds.
Note that the time resolution for both Primary School and Hospital data sets is 20 seconds. We confirmed that
the following results were similar when we used a single time step of 5 seconds for the Euler method. In each simulation, we initialize the opinions by $x_i(0) = 1$ for a pre-selected node $i\in\{1,\ldots, N\}$ 
and $x_j(0)=0$, $\forall j \neq i$. Every 10 seconds, we calculate the standard deviation of $\{ x_1(t), \ldots, x_N(t) \}$, which we denote by $\tilde{\sigma}(t)$. We run $N$ simulations in total, with one simulation for each $i$ to be initialized as $x_i(0) = 1$. Then, for each $t$ at which we measure $\tilde{\sigma}(t)$ (i.e., every 10 seconds), we average $\tilde{\sigma}(t)$ over all the $N$ simulations. We refer to the obtained average as the opinion spread at time $t$. 

\subsubsection{Results}

We first compare the opinion dynamics among kernel functions with different $h$ values (i.e.,
$h \in \{ 10, 400, 1000, 2000, 3000 \}$)
and the same decay rate $\alpha = 10^{-2}$. 
We show these kernel functions in Fig.~\ref{fig:opinion_spread}(a).
We show in Fig.~\ref{fig:opinion_spread}(b) the opinion spread as a function of time $t$ for the Primary School network. The figure indicates that, before $t \approx 8,620$, the opinions converge faster when $h$ is smaller. This result is intuitive because a small $h$ corresponds to a small mean response delay of the kernel function.
However, after this time, the opinions converge faster when $h$ is larger.
Figure~\ref{fig:opinion_spread}(c) shows the time courses of the opinion spread for the Hospital network. Unlike in Fig.~\ref{fig:opinion_spread}(b), the monotonic tendency between the opinion spread and $h$ is absent in early times. However, after $t \approx 30,380$, the opinions converge faster for larger $h$, which is qualitatively the same as the result for the Primary School network shown in Fig.~\ref{fig:opinion_spread}(b).
We consider that this phenomenon occurs because a kernel function with a large mean response delay (i.e., large $h$) is active (i.e., its value is larger than an arbitrary threshold value ($>0$)) for longer time than a kernel function with a short mean response delay. This factor may have driven faster convergence.

Next, we compare among kernel functions that share the same mean response delay of $150$ and have different standard deviations of the response delay. We show five such kernels in Fig.~\ref{fig:opinion_spread}(d), where $\sigma$ represents the standard deviation of each kernel. We consider $\sigma = 55$, $75$, $95$, $115$, and $135$.
We show in Figs.~\ref{fig:opinion_spread}(e) and~\ref{fig:opinion_spread}(f) time courses of the opinion spread for the Primary School and Hospital networks, respectively, under the different kernel functions shown in Fig.~\ref{fig:opinion_spread}(d). We find that the opinions converge faster in both networks when $\sigma$ is larger. Similar to the results shown in Figs.~\ref{fig:opinion_spread}(b) and~\ref{fig:opinion_spread}(c), we suggest that the faster convergence of opinions with a larger $\sigma$ may owe to a longer time window in which the kernel is active.

\begin{figure}[t]
   \centering
     \includegraphics[width=1\textwidth]{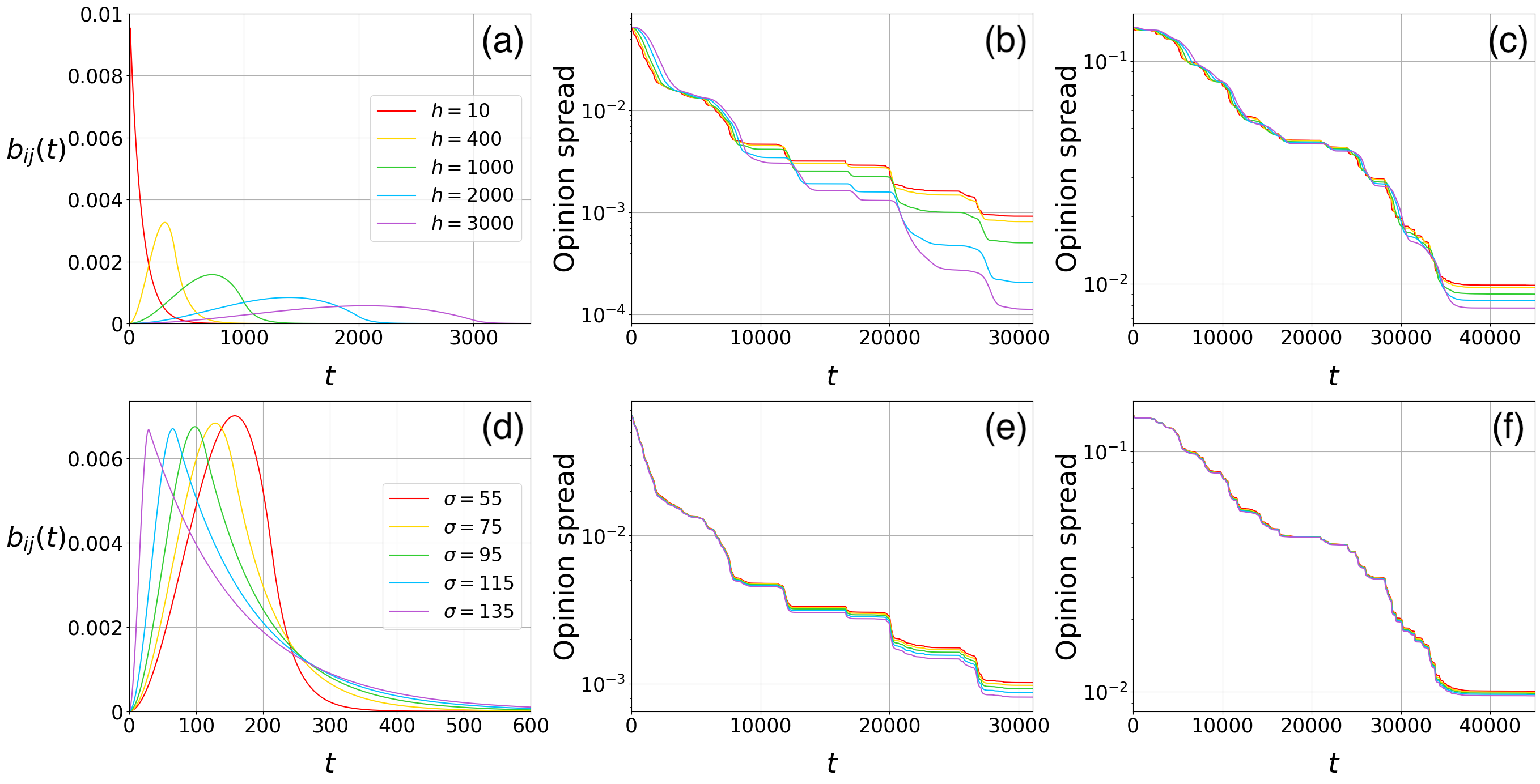}
     \caption{Opinion dynamics on spline tie-decay networks. (a) Kernels with different mean response delays. (b) Time courses of the opinion spread on the Primary School network for the kernels shown in (a). (c) Time courses of the opinion spread on the Hospital network for the kernels shown in (a). (d) Kernels with the same mean response delay of $150$ seconds and different standard deviations of the response delay, $\sigma$. (e) Time courses of the opinion spread on the Primary School network for the kernels shown in (d). (f) Time courses of the opinion spread on the Hospital network for the kernels shown in (d). In (a), we set $\alpha=10^{-2}$ for all kernels. To maintain the area under the curve $\mathcal{A} = 1$, we set $k\approx 9.52\times 10^{-3}$, $2.31\times 10^{-3}$, $6.98\times 10^{-4}$, $2.26\times 10^{-4}$, and $1.10\times 10^{-4}$ when $h=10$, $400$, $1000$, $2000$, and $3000$, respectively. In (d), for $\mathcal{A} = 1$ to be respected, the kernel with $\sigma= 55, 75, 95, 115$, and $135$ is realized by $(\alpha, h, k)\approx (0.033, 213.11, 3.82\times 10^{-3})$, $(0.015, 156.85, 5.69\times 10^{-3})$, $(0.011, 112.46, 6.27\times 10^{-3})$, $(8.78\times 10^{-3}, 70.75, 6.54\times 10^{-3})$, and $(7.42\times 10^{-3}, 30.11, 6.65\times 10^{-3})$, respectively.
}
\label{fig:opinion_spread}
\end{figure}

\subsection{SIR dynamics}

As a third demonstration of the spline tie-decay network, we investigate a stochastic epidemic dynamics model.

\subsubsection{Stochastic SIR model}

We consider the stochastic SIR model with which each node is either susceptible (S), infectious (I), or recovered (R) at any time $t$. An infectious node, denoted by $i$, independently infects each of its susceptible neighbors, denoted by $j$, at a rate $\beta b_{ij}(t) > 0$, where $\beta$ is the infection rate. Each infectious node recovers at a rate $\gamma>0$, called the recovery rate.

For both Primary School and Hospital networks, we set $\beta=0.5$. We set $\gamma=5\times 10^{-4}$ for the Primary School network and $\gamma=10^{-4}$ for the Hospital network. We use different $\gamma$ values for the two data sets to keep the epidemic dynamics ongoing until the time of later events in the respective network in many runs.

\subsubsection{Simulation methods}
\label{sec:SIR_methods}

To simulate the stochastic SIR model  on the spline tie-decay network, we employ the rejection sampling algorithm~\cite{masuda2023gillespie} with a time step of two seconds. We confirmed that the following results are quantitatively similar when we use a time step of one second.
We launch each simulation with the sole initially infectious $i$th node. All the other $N-1$ nodes are initially susceptible. For a given tie-decay network, we run the simulation $40$ times for each of the $N$ possible initially infectious nodes $i\in\{1,\ldots, N\}$. Therefore, for a given kernel function for the spline tie-decay network, we run $236 \times 40 = 9,440$ simulations for the Primary School network $(N=236)$ and $49 \times 40 = 1,960$ simulations for the Hospital network $(N=49)$. We then compute the number of recovered nodes averaged over all simulations at each time $t$, which we denote by $N_{\text{R}}(t)$.
Because a recovered node has necessarily undergone the infectious state, $N_{\text{R}}(t)$ is a standard index for quantifying the magnitude of epidemic spreading.

\subsubsection{Results}

We begin by comparing the SIR dynamics for the different kernels shown in Fig.~\ref{fig:opinion_spread}(a). Figure~\ref{fig:disease_spread}(a) shows $N_{\text{R}}(t)$ for the Primary School network. We observe larger epidemic spreading when $h$ is smaller. This result is intuitive because, if it takes longer time for the weight of the edge between an infectious node and a susceptible node to grow large after a contact event (i.e., larger $h$), then it is more probable that the infectious node recovers before infecting the susceptible neighbor.

We show $N_{\text{R}}(t)$ on the Hospital network in Fig.~\ref{fig:disease_spread}(b). Similar to Fig.~\ref{fig:disease_spread}(a), $N_{\text{R}}(t)$ roughly monotonically decreases with $h$. However, this tendency only loosely holds true for this network; the final size (i.e., $\lim_{t\to\infty} N_{\text{R}}(t)$) with $h=10^3$ (green line in Fig.~\ref{fig:disease_spread}(b)) is larger than that with $h=400$ (yellow line).

Prior research suggested that bursty nature of temporal networks can enhance epidemic spreading by having a higher frequency of short inter-event times and long inter-event times and a lower frequency of intermediately sized inter-event times compared to the Poisson process~\cite{jo2014analytically, masuda2020small}. This phenomenon is intuitively because an increased frequency of short inter-event times can enhance epidemic spreading. With spline tie-decay networks, one can compare impacts of different kernels with the same mean response delay and different standard deviation, $\sigma$, of the response delay, as we did for the opinion dynamics model in section~\ref{sub:opinion}. The kernels with a larger $\sigma$ imply that the impact of an event on the edge weight tends to be large at both short and long delay, generating a situation analogous to the case of bursty event sequences.

Motivated by this analogy, we compare epidemic spreading among the kernels shown in Fig.~\ref{fig:opinion_spread}(d). Figures~\ref{fig:disease_spread}(c) and \ref{fig:disease_spread}(d) show $N_{\text{R}}(t)$ for the different kernels for the Primary School and Hospital networks, respectively. We find that $N_{\text{R}}(t)$ is not monotonic in terms of $\sigma$ for both networks. Therefore, the results shown in Fig.~\ref{fig:disease_spread} as a whole suggest that the impact of the mean response delay (i.e., $h$) on epidemic dynamics is larger than that of the standard deviation of the response delay (i.e., $\sigma$).

\begin{figure}[t]
   \centering
     \includegraphics[width=0.66\textwidth]{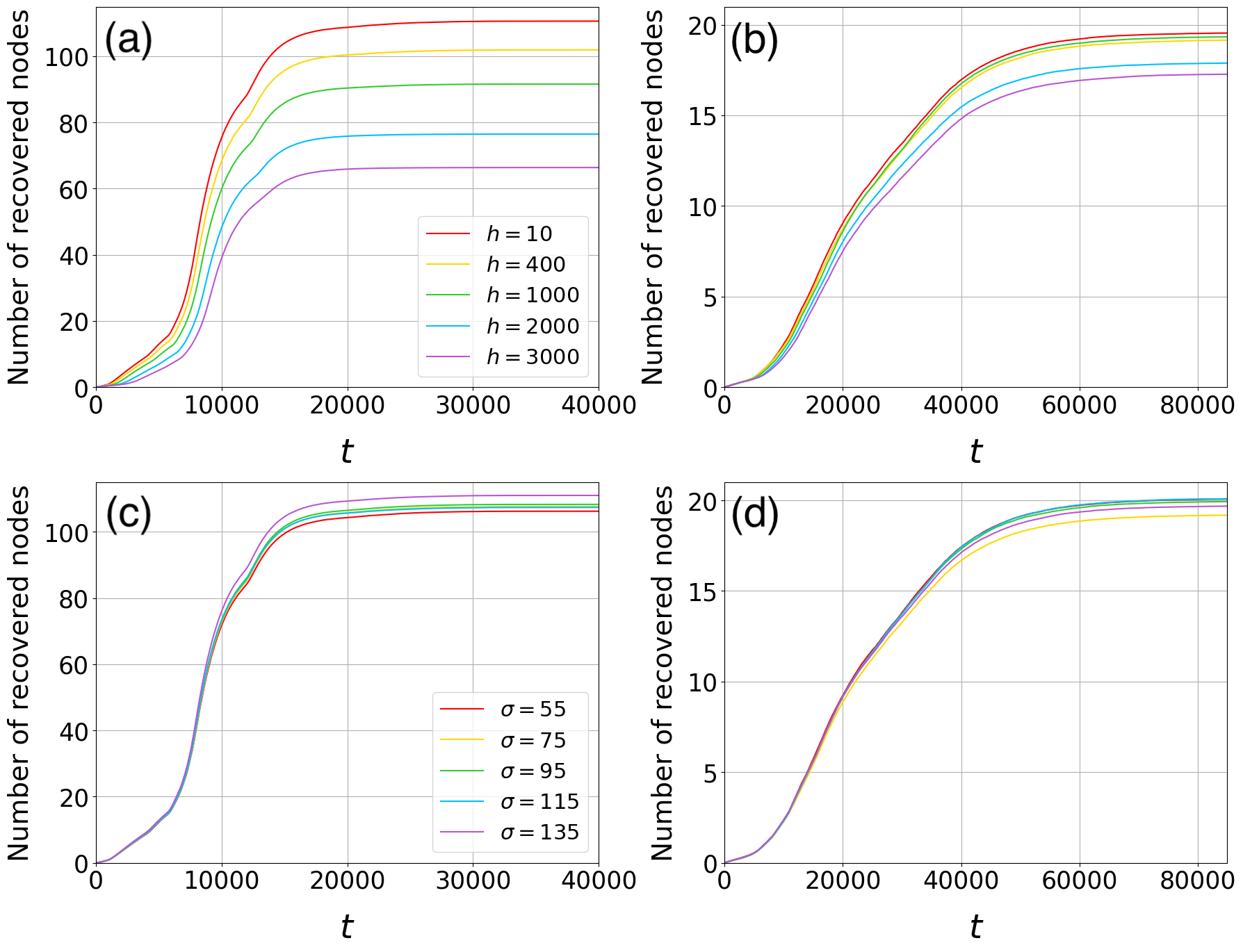}
     \caption{SIR dynamics on spline tie-decay networks. (a) Time courses of the average number of recovered nodes on the Primary School network for the kernels shown in Fig.~\ref{fig:opinion_spread}(a). (b) Time courses for the Hospital network for the kernels shown in Fig.~\ref{fig:opinion_spread}(a). (c) Time courses for the Primary School network for the kernels shown in Fig.~\ref{fig:opinion_spread}(d). (d) Time courses for the Hospital network for the kernels shown in Fig.~\ref{fig:opinion_spread}(d). In (a) and (c), we set the infection rate $\beta=0.5$ and the recovery rate $\gamma=5\times 10^{-4}$. In (b) and (d), we set $\beta=0.5$ and $\gamma=10^{-4}$.}
\label{fig:disease_spread}
\end{figure}

\section{Discussion}

We proposed the spline tie-decay network. Compared to its exponential counterpart, the spline tie-decay network enables continuity (more strongly, $C^1$-continuity) in the edge weight, which we suggest to be a practically useful property for downstream tasks such as network embedding. Our idea is to stitch a cubic polynomial and the exponential function to obtain a $C^1$-continuous kernel function. The thus obtained spline tie-decay network eliminates the need of tracking all the past event times and provides a computationally efficient algorithm to update the edge weight upon the arrival of any new event, which is a desirable feature shared with the exponential tie-decay network.

The spline kernel is the same as the exponential kernel at large $t$. One may want to use kernel functions that decay more slowly than the exponential function. In a related vein, Hawkes processes with power-law decay functions have been used for modeling seismological data \cite{vere1970stochastic, hawkes1971point, hawkes1971spectra, Ogata1988JAmStatAssoc}, video viewing activity \cite{crane2008robust}, predicting temporal patterns in a social media \cite{kobayashi2016tideh}, and modeling the spread of information \cite{murayama2021modeling}. Tie-decay networks with such a power-law decaying kernel functions can be mimicked by a mixture of exponentially decaying kernel functions, regardless of whether or not we use spline functions to guarantee the $C^1$-continuity. One can realize a mixture of exponentials by giving the weight of edge $(i, j)$ at time $t$ by $b_{ij}(t) = w_1 \tilde{b}_{1, ij}(t) + w_2 \tilde{b}_{2, ij}(t)$, where $w_1 + w_2 = 1$ and $\tilde{b}_{m, ij}(t)$ (with $m \in \{1, 2 \}$) obeys the exponential or spline tie-decay network with the exponential decay rate $\alpha_m$. Then, kernel function $\phi(t)$ is the weighted average of two kernels with different exponential decay rates. It is known that a mixture of two, or a few, exponentially decaying functions can approximate a power-law decaying function over a reasonably long scale of interest \cite{feldmann2002performance, jiang2016two-state, okada2020long-tailed}. Mixtures of infinitely many exponential functions can even create an exact power-law decaying function \cite{kurthnelson2009temporal-difference, masuda2018gillespie}. Such a mixture of exponentials combined with the tie-decay network does not disrupt fast and memory-efficient nature of the online updating algorithm for $b_{ij}(t)$. Therefore, the methods presented in this article are applicable to the cases in which $\phi(t)$ is a mixture of exponential or spline kernel functions.

In addition to the variation of the kernel function in monotonically decaying part, or at large $t$, extensions on the cubic spline polynomial part are possible. We proposed to use just one cubic polynomial covering $t\in [0, h]$, i.e., a cubic spline with one panel. Alternatively, one can introduce more panels to cover $[0, h]$. In particular, for a fixed number $q$, we first set $0 < h_1< h_2 < \cdots < h_q \equiv h$ and require $\phi(h_l)=k_l$ $(\ge 0)$ for all $l\in\{1,\ldots, q\}$. Then, one can define a cubic spline polynomial with $q$ panels supplied with preferred boundary conditions around $t=0$ and $t=h$. By choosing the $h_l$ and $k_l$ values, one can customize the shape of the kernel function. Because we anyways need to track all the event times satisfying $t - \tilde{t}_{\ell} < h$ to run the spline tie-decay network, using multiple panels for the spline part, which we have outlined here, does not essentially increase the computational or memory burden.
By the same token, other types of splines can replace the cubic spline in $[0, h]$, with a caveat that there is a general trade-off between the shape of the kernel function that can be expressed and the computational cost.

We showcased the spline tie-decay network using an opinion formation and epidemic dynamics models. Clarifying how these and other dynamics, including adaptive (i.e., co-evolutionary) network dynamics~\cite{sayama2013modeling, berner2023adaptive}, are affected by the shape of the kernel function warrants future work. Application of the spline tie-decay network to structural analysis of temporal networks, such as time-dependent centrality measures and temporal community detection~\cite{fortunato2010community, li2020exploring, masuda2020guide} may also be beneficial. Owing to the avoidance of discontinuous jumps in the edge weight, spline tie-decay networks are expected to help eliminate discontinuity in their time-dependent measurements such as centrality or communities as a function of time. This feature may be advantageous in various temporal network data analysis.

\section{Acknowledgments}

The work of Naoki Masuda was supported in part by the National Science Foundation (NSF) under grant DMS-2052720, in part by JSPS KAKENHI under grants 21H04595, 23H03414, 24K14840, and 24K03013, and in part by Japan Science and Technology Agency (JST) under grant JPMJMS2021. We also thank SocioPatterns organization for providing Primary School and Hospital data sets.

\clearpage
\bibliographystyle{unsrt}
\bibliography{bibliography.bib}

\end{document}